\definecolor{mathcolor2}{rgb}{1,.2,1}
\def\square{\vcenter{\vbox{\hrule\hbox{\vrule
     \vbox to 8.8pt{\hbox to 10pt{}\vfill}\vrule}\hrule}}}
\newcommand{\cC}{{\mathcal{C}}}
\newcommand{\tr}{{\mathrm{Tr}}}
\newcommand{\gf}{{\mathbb{F}}}
\newcommand{\support}{{\mathrm{suppt}}}
\newcommand{\C}{{\mathcal{C}}}
\newcommand{\bc}{{\mathbf{c}}}
\newcommand{\bg}{{\mathbf{g}}}
\begin{document}

\title{A construction of optimal locally recoverable codes}


\author{Xiaoru Li       \and   Ziling Heng
}


\institute{
X. Li and Z.Heng are with the School of Science, Chang'an University, Xi'an, 710064, China. \email{lixiaoru@163.com, zilingheng@chd.edu.cn}\\
Z. Heng's research was supported in part by the National Natural Science Foundation of China under Grant 11901049, in part by the  Young Talent Fund of University Association for Science and Technology in Shaanxi, China, under Grant 20200505 and in part by the Fundamental Research Funds for the Central Universities, CHD, under Grant 300102122202.}
\date{Received: date / Accepted: date}

\maketitle

\begin{abstract}
Locally recoverable codes are widely used in distributed and cloud storage systems. The objective of this paper is to present a construction of near MDS codes with oval polynomials and then determine the locality of the codes. It turns out that the near MDS codes and their duals are both distance-optimal and dimension-optimal locally recoverable codes. The lengths of the locally recoverable codes are different from known ones in the literature. 
\keywords{Linear code \and near MDS code \and locally recoverable codes}
\subclass{ 94B05 \and 94A05}
\end{abstract}

\section{Introduction}
Let $\gf_q$ be the finite field with $q$ elements, where $q$ is a power of a prime $p$. Let $\mathbb{F}_{q}^*:=\mathbb{F}_{q}\setminus \{0\}$.

Let $\mathcal{C} \subseteq \mathbb{F}_{q}^{n}$ with $\C$ a non-empty set. Then $\mathcal{C}$ is said to be an $[n,k,d]$ \emph{linear code} over $\mathbb{F}_q$ if it is a $k$-dimensional linear subspace over $\mathbb{F}_{q}$, where $d$ denotes its minimal distance. Define the \emph{dual} of an $[n,k,d]$ linear code $\C$ by
$$
\mathcal{C}^{\perp}=\left\{ \textbf{u} \in \mathbb{F}_{q}^{n}: \langle  \textbf{u}, \mathbf{c} \rangle=0\mbox{ $\forall$ }\mathbf{c} \in \mathcal{C} \right\},
$$
where $\langle \textbf{u},\mathbf{c} \rangle$ denotes the Euclidean inner product of $\textbf{u}$ and $\mathbf{c}$. It is obvious that $\mathcal{C}^{\perp}$ is an $[n,n-k]$ linear code. Denote by $A_{i}$ the number of codewords of weight $i$ in an $[n,k]$ linear code $\mathcal{C}$ for $0 \leq i \leq n$. The polynomial $A(z)=1+A_{1}z+A_{2}z^2+ \cdots +A_{n}z^n$ is referred to as the \emph{weight enumerator} of $\mathcal{C}$. The weight enumerator is an interesting research project as it not only contains the error detection and error correction capabilities of the code, but also is useful for calculating the error probability of error detection.
In recent years, the weight enumerators of linear codes were widely studied in the literature \cite{D, DT, HDZ, L1, L2, TD, Wqiu}.

Linear codes achieving or nearly achieving the Singleton bound are important  in both theory and practice.
Linear codes of parameters $[n, k, n-k+1]$ are called MDS (maximum distance separable) codes.
An $[n, k, n-k]$ linear code is said to be almost maximum distance separable (AMDS for short).
A linear code is referred to as an near maximum distance separable (near MDS or NMDS for short) code provided that
both this code and its dual are AMDS.
Constructions of MDS, AMDS and NMDS as well as their applications were investigated in \cite{D, DT, DodLan95, HY, Shix1, Shix2, TanP, TD, Wqiu, W, WHL}.

Locally recoverable codes (LRCs for short) are widely used in distributed data storage systems.
A LRC with locality $r$ is a  block code such that any symbol in the encoding is a function of $r$ other symbols. In this letter, we only consider linear LRCs. Denote by $[n]=\{ 0,1,\cdots,n-1 \}$ with $n$ a positive integer. For an $[n,k,d]$ linear code $\mathcal{C}$ over $\mathbb{F}_q$, let the coordinates of the codewords in $\mathcal{C}$ be indexed by the elements in $[n]$. For any $i \in [n]$, if there always exist a subset $R_{i} \subseteq [n] \backslash {i}$ with $|R_i|=r$ and a function $f_{i}(x_1,x_2,\cdots,x_r)$ over $\mathbb{F}_q^{r}$ such that $c_i=f_{i}(\mathbf{c}_{R_i})$ for any $\mathbf{c}=(c_0,\cdots,c_{n-1}) \in \mathcal{C}$, then $\mathcal{C}$ is referred to as an $(n,k,d,q;r)$-LRC, where $\mathbf{c}_{R_i}$ is the projection of $\mathbf{c}$ at $R_{i}$ and the set $R_{i}$ is called the repair set of $c_i$. There exist some tradeoffs between the parameters of LRCs. For any $(n,k,d,q;r)$-LRC, the Singleton-like bound (see \cite{CM}) is given by
\begin{eqnarray}\label{eqn-slbound}
d \leq n-k- \left \lceil \frac{k}{r} \right \rceil +2.
\end{eqnarray}
LRCs are said to be distance-optimal if they achieve this bound.
For any $(n,k,d,q;r)$-LRC, the Cadambe-Mazumdar bound (see \cite{GH}) is given as
\begin{eqnarray}\label{eqn-cmbound}
k \leq \mathop{\min}_{t \in \Bbb Z^{+}} [rt+k_{opt}^{(q)}(n-t(r+1),d)],
\end{eqnarray}
where $k_{opt}^{(q)}(n,d)$ denotes the largest possible dimension of a linear code of length $n$, minimum distance $d$ over $\gf_q$, $\Bbb Z^{+}$ denotes the set of all positive integers.
LRCs are called dimension-optimal if they achieve the Cadambe-Mazumdar bound.
Constructing distance-optimal or dimension-optimal LRCs is an interesting research topic. In \cite{GY} and \cite{TanP}, AMDS and NMDS codes were used to derive optimal or nearly optimal locally recoverable codes. Hence it is interesting to construct new AMDS or NMDS codes with desired locality.

In \cite{Wqiu}, several families of  NMDS codes with some special matrixes were constructed. The objective of this paper is to present a construction of NMDS codes with larger lengths than those in \cite{Wqiu}  and then determine the locality of the codes. It turns out that the NMDS codes and their duals are both distance-optimal and dimension-optimal LRCs.

The rest of this paper is organized as follows. In Section \ref{sec2}, we present some preliminaries on NMDS codes and oval polynomials. In Section \ref{sec3}, we construct a family of $[q+5,3,q+2]$ NMDS codes with $q=2^m$. In Section \ref{sec4}, the localities of the NMDS codes and their duals are determined.
In Section \ref{sec4}, we conclude this paper and give some remarks.
\section{Preliminaries}\label{sec2}
Let $(1,A_1,\cdots,A_n)$ and $(1,A_1^\perp,\cdots,A_n^\perp)$  respectively denote the weight distributions of a linear code $\C$ and its dual $\C^\perp$ with length $n$. The weight distributions of an NMDS code and its dual satisfy the following recurrence relations.

\begin{lemma}[\cite{DodLan95}]\label{lem-N1}
Let $\mathcal{C}$ be an $[n, k]$ NMDS code over $\gf_q$. Then
\begin{eqnarray*}
A_{k+s}^\perp = \binom{n}{k+s} \sum_{j=0}^{s-1} (-1)^j \binom{k+s}{j}(q^{s-j}-1) +
             (-1)^s \binom{n-k}{s}A_{k}^\perp
\end{eqnarray*}
for $s \in \{1,2, \ldots, n-k\}$; and
\begin{eqnarray*}
A_{n-k+s} = \binom{n}{k-s} \sum_{j=0}^{s-1} (-1)^j \binom{n-k+s}{j}(q^{s-j}-1)
+(-1)^s \binom{k}{s}A_{n-k}
\end{eqnarray*}
for $s \in \{1,2, \ldots, k\}$.
\end{lemma}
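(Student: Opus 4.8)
The plan is to reduce the two recurrences to (i) the two vanishing conditions defining an NMDS code and (ii) a single classical identity between the weight distributions of $\C$ and $\C^\perp$, and then to solve a triangular linear system. Since $\C$ is NMDS we have $d(\C)=n-k$ and $d(\C^\perp)=k$, so $A_1=\cdots=A_{n-k-1}=0$ and $A_1^\perp=\cdots=A_{k-1}^\perp=0$. The identity I would invoke is that, for every integer $t$ with $0\le t\le n$,
\[
\sum_{i=0}^{n-t}\binom{n-i}{t}A_i \;=\; q^{\,k-t}\sum_{j=0}^{t}\binom{n-j}{n-t}A_j^\perp .
\]
This is one of the standard consequences of the MacWilliams transform (equivalently a Pless power-moment identity); one way to see it directly is to count the pairs $(\bc,S)$ with $\bc\in\C$, $S\subseteq[n]$, $|S|=t$ and $S\cap\support(\bc)=\emptyset$: summing over $\bc$ first gives the left-hand side, while for fixed $S$ the codewords vanishing on $S$ form the shortening of $\C$ at $S$, of dimension $k-t+\dim\{\bw\in\C^\perp:\support(\bw)\subseteq S\}$, and an inclusion--exclusion over the supports of the dual codewords converts $\sum_{|S|=t}q^{\dim(\cdot)}$ into $q^{k-t}\sum_j\binom{n-j}{n-t}A_j^\perp$.

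To get the dual weights I would set $t=k+s$ with $1\le s\le n-k$. On the left, $\binom{n-i}{k+s}\ne 0$ only for $i\le n-k-s\le n-k-1$, so the vanishing of $A_1,\dots,A_{n-k-1}$ leaves only $i=0$, giving $\binom{n}{k+s}$; on the right, the vanishing of $A_1^\perp,\dots,A_{k-1}^\perp$ leaves $j=0$ and $k\le j\le k+s$, and writing $j=k+i$ with $\binom{n-j}{n-t}=\binom{n-k-i}{s-i}$ yields
\[
(q^{s}-1)\binom{n}{k+s}\;=\;\sum_{i=0}^{s}\binom{n-k-i}{s-i}\,A_{k+i}^\perp ,\qquad 1\le s\le n-k .
\]
This is lower triangular in $A_k^\perp,A_{k+1}^\perp,\dots$, with $A_k^\perp$ (the defect of the AMDS code $\C^\perp$; one also gets $A_k^\perp=A_{n-k}$ from the case $t=k$) entering as a free parameter. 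To produce the closed form I would first subtract the $i=0$ contribution $\binom{n-k}{s}A_k^\perp$ and invert the remaining system for $A_{k+1}^\perp,\dots$ via the generating-function substitution $y=x/(1+x)$, which inverts the matrix $\big(\binom{n-k-i}{s-i}\big)$ by multiplication by $(1-y)^{n-k}$; two applications of the Vandermonde-type identity $\binom{n}{a}\binom{n-a}{b}=\binom{n}{a+b}\binom{a+b}{b}$ — one collapsing the $A_k^\perp$-terms to $(-1)^s\binom{n-k}{s}A_k^\perp$, the other rewriting the $q$-part as $\binom{n}{k+s}\sum_{j=0}^{s-1}(-1)^j\binom{k+s}{j}(q^{s-j}-1)$ — give the first displayed formula. (Equivalently, one verifies the stated closed form against the recursion by induction on $s$.) I expect this last bookkeeping with the alternating binomial sums to be the only real obstacle; everything before it is formal.

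Finally, the second recurrence is the first one applied to the dual code. Since $\C^\perp$ is an $[n,n-k]$ NMDS code with $(\C^\perp)^\perp=\C$, replacing $k$ by $n-k$, each $A_j^\perp$ by $A_j$, and $A_k^\perp$ by $A_{n-k}$ in the formula just obtained, and using $\binom{n}{n-k+s}=\binom{n}{k-s}$, gives
\[
A_{n-k+s}=\binom{n}{k-s}\sum_{j=0}^{s-1}(-1)^j\binom{n-k+s}{j}(q^{s-j}-1)+(-1)^s\binom{k}{s}A_{n-k},\qquad 1\le s\le k,
\]
which is exactly the claim, the range $1\le s\le k$ coming from $n-(n-k)=k$.
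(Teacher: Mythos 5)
The paper does not prove Lemma~\ref{lem-N1} at all: it is imported verbatim from Dodunekov--Landgev \cite{DodLan95}, so there is no in-paper argument to compare against. Your blind derivation is a sound, self-contained proof and is essentially the classical route. The moment identity $\sum_i \binom{n-i}{t}A_i = q^{k-t}\sum_j\binom{n-j}{n-t}A_j^\perp$ is correct (your double count is really a direct count, not an inclusion--exclusion, but that is cosmetic), and specializing to $t=k+s$ with the NMDS vanishing conditions $A_1=\cdots=A_{n-k-1}=0$, $A_1^\perp=\cdots=A_{k-1}^\perp=0$ does produce the triangular system $(q^s-1)\binom{n}{k+s}=\sum_{i=0}^s\binom{n-k-i}{s-i}A_{k+i}^\perp$; I checked that the stated closed form satisfies it for $s=1,2$ and that the coefficient of $A_k^\perp$ after substitution collapses to $\binom{n-k}{s}\sum_{i=0}^s(-1)^i\binom{s}{i}=0$ via $\binom{n-k}{i}\binom{n-k-i}{s-i}=\binom{n-k}{s}\binom{s}{i}$, exactly as your Vandermonde step predicts. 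The only thin spot is the explicit inversion of the triangular matrix, which you wave at with a generating-function substitution; your fallback of verifying the closed form against the recursion by induction on $s$ is the clean way to make this airtight and works. The side remark $A_k^\perp=A_{n-k}$ (case $t=k$) and the duality step giving the second recurrence (apply the first to the $[n,n-k]$ NMDS code $\C^\perp$ and use $\binom{n}{n-k+s}=\binom{n}{k-s}$) are both correct. In short: the proposal is a valid proof of a result the paper merely cites, and it buys the reader a derivation from first principles at the cost of some binomial bookkeeping.
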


The following theorem shows an interesting property of NMDS codes.

\begin{lemma}[\cite{FaldumWillems97}]\label{lem-N2}
Let $\mathcal{C}$ be an NMDS code and $\support(\bc)=\{1 \leq i \leq n: c_i \neq 0\}$
denote the support of $\bc=(c_1, \ldots, c_n)\in \C$.
Then for any minimum weight codeword $\bc$ in $\mathcal{C}$, there exists, up to a multiple, a unique minimum weight codeword $\bc^\perp$ in $\mathcal{C}^\perp$ satisfying
$\support(\bc) \cap \support(\bc^\perp)=\emptyset$. Besides, $\mathcal{C}$ and $\C^\perp$ have the same number of minimum weight codewords.
\end{lemma}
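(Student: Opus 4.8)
The plan is to read off existence, uniqueness and the weight from a single matrix. Fix a generator matrix $G$ of $\mathcal{C}$; since $\dim\mathcal{C}^\perp=n-k$, this $k\times n$ matrix is at the same time a parity-check matrix of $\mathcal{C}^\perp$. The two AMDS hypotheses enter as follows: $d(\mathcal{C})=n-k$ guarantees that a minimum weight codeword of $\mathcal{C}$ has exactly $k$ zero coordinates, while $d(\mathcal{C}^\perp)=k$ is equivalent to every $k-1$ columns of $G$ being linearly independent and also says that every nonzero word of $\mathcal{C}^\perp$ has weight at least $k$.

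First I would fix a minimum weight codeword $\bc\in\mathcal{C}$, so $\wt(\bc)=n-k$; set $S=\support(\bc)$ and $Z=[n]\setminus S$, so $|Z|=k$, and let $G_Z$ be the $k\times k$ submatrix of $G$ formed by the columns indexed by $Z$. Writing $\bc=\bx G$ with $\bx\in\gf_q^k\setminus\{\bzero\}$, the vanishing of $\bc$ on $Z$ reads $\bx G_Z=\bzero$, so $G_Z$ is singular and $\rank(G_Z)\le k-1$; on the other hand the linear independence of every $k-1$ columns of $G$ forces $\rank(G_Z)\ge k-1$. Hence $\rank(G_Z)=k-1$ and both the left and the right kernel of $G_Z$ are one-dimensional. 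The left kernel is exactly the set of coefficient vectors of codewords of $\mathcal{C}$ whose support lies in $S$; being one-dimensional, it shows $\bc$ is, up to a nonzero scalar, the unique such codeword, which is the ingredient for injectivity later.

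Next I would produce $\bc^\perp$ from the right kernel. A word $\bw\in\gf_q^n$ lies in $\mathcal{C}^\perp$ with support contained in $Z$ precisely when its restriction $\bw|_Z\in\gf_q^k$ satisfies $G_Z(\bw|_Z)^{\mathsf T}=\bzero$, i.e. lies in the right kernel of $G_Z$; that kernel is one-dimensional and nonzero, so a nonzero choice gives a nonzero $\bc^\perp\in\mathcal{C}^\perp$ with $\support(\bc^\perp)\subseteq Z$, unique up to a nonzero scalar. Now $\bc^\perp\ne\bzero$ gives $\wt(\bc^\perp)\ge k$, while $\support(\bc^\perp)\subseteq Z$ gives $\wt(\bc^\perp)\le|Z|=k$, so $\wt(\bc^\perp)=k$: thus $\bc^\perp$ is a minimum weight codeword of $\mathcal{C}^\perp$ and $\support(\bc^\perp)=Z=[n]\setminus\support(\bc)$, as wanted. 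Conversely, any minimum weight $\bc^\perp\in\mathcal{C}^\perp$ disjoint in support from $\bc$ must have support exactly $Z$, hence lies in that one-dimensional space, giving uniqueness up to a multiple.

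For the last assertion I would count one-dimensional subspaces: each subspace spanned by a minimum weight codeword contains exactly $q-1$ minimum weight words, so it suffices to biject the minimum weight lines of $\mathcal{C}$ with those of $\mathcal{C}^\perp$. The map $\langle\bc\rangle\mapsto\langle\bc^\perp\rangle$ is well defined by the above; it is injective because $\support(\bc^\perp)$ recovers $S=[n]\setminus\support(\bc^\perp)$ and $\bc$ is the unique (up to scalar) codeword of $\mathcal{C}$ supported inside $S$; and it is surjective by symmetry, because $\mathcal{C}^\perp$ is again NMDS with $(\mathcal{C}^\perp)^\perp=\mathcal{C}$, so the same construction applied to $\mathcal{C}^\perp$ inverts it. Hence $\mathcal{C}$ and $\mathcal{C}^\perp$ have equally many minimum weight lines, and multiplying by $q-1$ finishes the proof. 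The one place that needs care is keeping the two kernels of the square matrix $G_Z$ apart — the left kernel encodes codewords of $\mathcal{C}$, the right kernel encodes codewords of $\mathcal{C}^\perp$ — and applying each AMDS inequality to the right code.
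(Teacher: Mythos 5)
Your proof is correct. Note that the paper itself offers no proof of this lemma --- it is quoted verbatim from Faldum--Willems \cite{FaldumWillems97} --- so there is nothing in the source to compare against line by line; but your argument is essentially the standard one from that literature: view the generator matrix $G$ of $\mathcal{C}$ as a parity-check matrix of $\mathcal{C}^\perp$, use $d(\mathcal{C}^\perp)=k$ to force the complementary $k\times k$ submatrix $G_Z$ to have rank exactly $k-1$, and read existence/uniqueness of $\bc$ and $\bc^\perp$ off the one-dimensional left and right kernels. The bookkeeping at the end (each minimum-weight line contributes $q-1$ codewords, and the line-to-line map is a bijection by symmetry of the construction under $(\mathcal{C}^\perp)^\perp=\mathcal{C}$) is also sound, so the count $A_{n-k}=A_k^\perp$ follows as claimed.
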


Next we list the definition and some properties of oval polynomial used in this letter.

\begin{definition} \cite{LN97} Let $q=2^m$ with $m\geq 2$.
If $f \in \gf_q[x]$ is a polynomial satisfying
\begin{enumerate}
\item $f$ is a permutation polynomial of $\gf_q$ with $\deg(f)<q$ and $f(0)=0$, $f(1)=1$;
\item for each $a \in \gf_q$, $g_a(x):=(f(x+a)+f(a))x^{q-2}$ is also a permutation polynomial of $\gf_q$,
\end{enumerate}
then $f$ is called an oval polynomial.
\end{definition}

To construct near MDS codes over $\gf_q$ in the paper, we need concrete oval polynomials over $\gf_q$. Some known infinite families of oval polynomials are listed in the following.

\begin{theorem}\label{thm-knownopolys}\cite[Table 1]{M}
Let $m \geq 2$ be an integer. The following are oval polynomials of $\gf_q$, where $q=2^m$.
\begin{itemize}
\item The translation polynomial $f(x)=x^{2^h}$, where $\gcd(h, m)=1$.
\item The Segre polynomial $f(x)=x^6$, where $m$ is odd.
\item The Glynn oval polynomial $f(x)=x^{3 \times 2^{(m+1)/2} +4}$, where $m$ is odd.
\item The Glynn oval polynomial $f(x)=x^{ 2^{(m+1)/2} + 2^{(m+1)/4} }$ for $m \equiv 3 \pmod{4}$.
\item The Glynn oval polynomial $f(x)=x^{ 2^{(m+1)/2} + 2^{(3m+1)/4} }$ for $m \equiv 1 \pmod{4}$.
\item The Cherowitzo oval polynomial $f(x)=x^{2^e}+x^{2^e+2}+x^{3 \times 2^e+4},$ where $e=(m+1)/2$ and $m$ is odd.
\item The Payne oval polynomial $f(x)=x^{\frac{2^{m-1}+2}{3}} + x^{2^{m-1}} + x^{\frac{3 \times 2^{m-1}-2}{3}}$,
        where $m$ is odd.
\item The Subiaco polynomial
$$
f_a(x)=((a^2(x^4+x)+a^2(1+a+a^2)(x^3+x^2)) (x^4 + a^2 x^2+1)^{2^m-2}+x^{2^{m-1}},
$$
where $\tr_{q/2}(1/a)=1$ and $a \not\in \gf_4$ if $m \equiv 2 \bmod{4}$.
\item The Adelaide oval polynomial
$$
f(x)=\frac{T(\beta^m)(x+1)}{T(\beta)} + \frac{T((\beta x + \beta^q)^m)}{T(\beta) (x+T(\beta)x^{2^{m-1}} +1)^{m-1}} + x^{2^{m-1}},
$$
where $m \geq 4$ is even, $\beta \in \gf_{q^2} \setminus \{1\}$ with $\beta^{q+1}=1$, $m \equiv \pm (q-1)/3 \pmod{q+1}$,
and $T(x)=x+x^q$.
\end{itemize}
\end{theorem}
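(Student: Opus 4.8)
The plan is to treat Theorem~\ref{thm-knownopolys} as a compilation and to establish it family by family by verifying the two conditions in the definition of an oval polynomial; the unifying tool is the classical dictionary between oval polynomials of $\gf_q$ ($q=2^m$) and hyperovals of the projective plane $\PG(2,q)$. To a polynomial $f$ with $f(0)=0$, $f(1)=1$ one associates the point set
\[
\mathcal{O}(f)=\{(1,t,f(t)):t\in\gf_q\}\cup\{(0,1,0),(0,0,1)\}\subset\PG(2,q),
\]
and one checks that no three points of $\mathcal{O}(f)$ are collinear precisely when (i) $f$ is a permutation of $\gf_q$ and (ii) for every $a\in\gf_q$ the map $g_a(x)=(f(x+a)+f(a))x^{q-2}$ is a permutation of $\gf_q$. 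So it suffices to show that each polynomial in the list defines a hyperoval (equivalently, satisfies (i) and (ii)) and that, after reduction modulo $x^q-x$, it has degree $<q$ with the stated normalization; in every case the degree bound and the values $f(0)=0$, $f(1)=1$ are immediate or follow by an obvious scaling.

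I would first dispatch the translation polynomial $f(x)=x^{2^h}$ with $\gcd(h,m)=1$ in full, since it isolates the mechanism. Here $f$ is $\gf_2$-linear and bijective (because $\gcd(h,m)=1$), so $f(x+a)+f(a)=f(x)=x^{2^h}$, and hence $g_a(x)=x^{2^h-1}$ on $\gf_q^{*}$; this is a permutation because $\gcd(2^h-1,\,2^m-1)=2^{\gcd(h,m)}-1=1$. Thus (i) and (ii) hold, $\deg f=2^h<2^m=q$, $f(0)=0$, $f(1)=1$. The Segre case $f(x)=x^6$ with $m$ odd is similar: (i) reduces to $\gcd(6,2^m-1)=1$, which holds for $m$ odd, and (ii) reduces, after expanding $(x+a)^6+a^6=x^2(x^4+a^2x^2+a^4)$ in characteristic $2$, to showing that $x^5+a^2x^3+a^4x$ is a permutation of $\gf_q$ for every $a$; this is Segre's original observation and admits a short direct verification.

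For the remaining families---the Glynn monomials, the Cherowitzo, Payne, Subiaco and Adelaide polynomials---I would invoke the constructions in the papers collected in \cite[Table~1]{M}, since proving that each associated point set is a hyperoval was the principal content of those works. The main obstacle in each case is condition (ii): one must show that $g_a$ is a permutation of $\gf_q$ for \emph{every} $a\in\gf_q$, and for the non-monomial polynomials (Cherowitzo, Subiaco, Adelaide) this cannot be reduced to a single gcd computation. The standard approaches are either to exhibit an explicit algebraic formula for the inverse of $g_a$ (as in Payne's and Cherowitzo's treatments) or to bound the relevant exponential/character sums and apply the Hasse--Weil bound to exclude nontrivial solutions; this exponential-sum analysis is where the genuine difficulty lies. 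Assembling these verifications yields the theorem. Since in the present paper these polynomials are used only as a black box to construct the codes in Section~\ref{sec3}, it is enough to cite them here; a fully self-contained proof would essentially reproduce the hyperoval literature.
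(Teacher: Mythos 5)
The paper itself offers no proof of this theorem: it is imported verbatim as a catalogue from \cite[Table 1]{M}, so there is no internal argument to compare against. Your proposal is a reasonable reconstruction of how such a proof would go, and the two cases you actually work out are correct. For the translation polynomial, $f(x+a)+f(a)=x^{2^h}$ by $\gf_2$-linearity, so $g_a(x)=x^{2^h-1}$ on $\gf_q^*$, and $\gcd(2^h-1,2^m-1)=2^{\gcd(h,m)}-1=1$ gives condition (ii); one small imprecision is that the bijectivity of $x\mapsto x^{2^h}$ itself is automatic (it is a power of the Frobenius automorphism, so a permutation for every $h$) --- the hypothesis $\gcd(h,m)=1$ is needed only for $g_a$, not for condition (i). Your Segre reduction $(x+a)^6+a^6=x^2(x^4+a^2x^2+a^4)$, hence $g_a(x)=x^5+a^2x^3+a^4x$, is also correct, though calling the permutation property of that quintic for all $a$ and odd $m$ ``a short direct verification'' understates Segre's actual argument. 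For the Glynn, Cherowitzo, Payne, Subiaco and Adelaide families you defer to the original literature, which is exactly what the paper does for the entire list; given that the theorem functions purely as a black box feeding Theorem \ref{thm-N}, this is the appropriate level of detail, and a genuinely self-contained proof would indeed amount to reproducing several substantial papers on hyperovals of $\PG(2,q)$.
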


\begin{lemma}\cite{Masch98}\label{lem-opoly1}
Let $q=2^m$ with $m\geq 2$. Then a polynomial $f$ with $f(0)=0$ over $\gf_q$ is an oval polynomial if and only if $f_u:=f(x)+ux$ is $2$-to-$1$ for each $u \in \gf_q^*$.
\end{lemma}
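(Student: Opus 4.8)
\myproof The plan is to funnel both the second defining property of an oval polynomial and the ``$2$-to-$1$'' condition through one and the same family of auxiliary difference--quotient maps
$\phi_a(x):=\dfrac{f(x)+f(a)}{x+a}$, defined for $x\in\gf_q\setminus\{a\}$, one for each $a\in\gf_q$ (this is well defined because $x\neq a$ forces $x+a\neq 0$ in characteristic $2$). Throughout I use that $x^{q-2}$ is the ordinary inverse $x^{-1}$ for $x\in\gf_q^*$ and equals $0$ at $x=0$, so that $g_a(0)=0$ and $g_a(x)=(f(x+a)+f(a))/x$ on $\gf_q^*$; here $f(0)=0$ is the standing hypothesis. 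The point is that each $\phi_a$ simultaneously controls $g_a$ and the fibres of $f_u$.

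First I would translate property (2). Substituting $x\mapsto x+a$ — a bijection carrying $\gf_q^*$ onto $\gf_q\setminus\{a\}$ — turns $g_a$ restricted to $\gf_q^*$ into $\phi_a$: indeed $g_a(x)=\dfrac{f(x+a)+f(a)}{x}=\dfrac{f(y)+f(a)}{y+a}=\phi_a(y)$ with $y=x+a$. Since $g_a(0)=0$ always, ``$g_a$ is a permutation of $\gf_q$'' is then equivalent to ``$g_a$ maps $\gf_q^*$ bijectively onto $\gf_q^*$'', i.e. to ``$\phi_a$ is a bijection from $\gf_q\setminus\{a\}$ onto $\gf_q^*$''. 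Thus property (2) for all $a$ is exactly: $\phi_a\colon\gf_q\setminus\{a\}\to\gf_q^*$ is a bijection for every $a\in\gf_q$.

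Next I would translate the $2$-to-$1$ condition. Fix $u\in\gf_q^*$; for $x_1\neq x_2$ one has $f(x_1)+ux_1=f(x_2)+ux_2 \Leftrightarrow f(x_1)+f(x_2)=u(x_1+x_2) \Leftrightarrow \phi_{x_1}(x_2)=u$. Since a map on a finite set is $2$-to-$1$ iff every element has exactly one other element sharing its image, ``$f_u$ is $2$-to-$1$'' says precisely that for each $x_1$ there is exactly one $x_2\neq x_1$ with $\phi_{x_1}(x_2)=u$. Letting $u$ run over $\gf_q^*$, ``$f_u$ is $2$-to-$1$ for every $u\in\gf_q^*$'' becomes ``for each $x_1$, the map $\phi_{x_1}\colon\gf_q\setminus\{x_1\}\to\gf_q$ attains every nonzero value exactly once''; as the domain $\gf_q\setminus\{x_1\}$ and the target $\gf_q^*$ both have $q-1$ elements, a pigeonhole count upgrades this to ``$\phi_{x_1}$ is a bijection $\gf_q\setminus\{x_1\}\to\gf_q^*$''. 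Comparing with the previous paragraph, this is literally the same condition, so property (2) and the $2$-to-$1$ property are equivalent; and either form forces property (1), since a $\phi_a$ that never takes the value $0$ means $f(x)\neq f(a)$ whenever $x\neq a$, i.e. $f$ is a permutation of $\gf_q$. Combined with $f(0)=0$ and the remaining normalizations $f(1)=1$, $\deg f<q$ (which the translations leave untouched), and with the trivial direction that an oval polynomial satisfies (1) and (2) by definition, this gives the stated equivalence. The step that needs the most care is exactly this pigeonhole upgrade — turning the fibre-size statement ``$f_u$ is $2$-to-$1$'' into the bijectivity of $\phi_{x_1}$, in particular noting that it excludes the value $0$ and so yields property (1) at no extra cost; the whole argument rests essentially on $\mathrm{char}=2$ (so that $x-a=x+a$ and $2$ is the generic fibre size over a field of even order) and on the convention $x^{q-2}=x^{-1}$, $0^{q-2}=0$. \hfill$\square$
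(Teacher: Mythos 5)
The paper gives no proof of this lemma at all --- it is quoted verbatim from \cite{Masch98} --- so there is nothing internal to compare your argument against; judged on its own, your proof is essentially the standard one and it is correct. The reduction of both conditions to the single statement that $\phi_a$ is a bijection from $\gf_q\setminus\{a\}$ onto $\gf_q^*$ for every $a$ is sound: the translation of property (2) via the substitution $y=x+a$ together with the convention $0^{q-2}=0$ (so $g_a(0)=0$) is right, the identification $f_u(x_1)=f_u(x_2)\Leftrightarrow\phi_{x_1}(x_2)=u$ is right, and the counting step that upgrades ``every nonzero value is attained exactly once'' on a domain of size $q-1$ to bijectivity onto $\gf_q^*$ --- and hence, because $0$ is omitted, to injectivity of $f$ itself, i.e.\ property (1) --- is exactly the pigeonhole you describe. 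It is worth noting that your $\phi_a$ is the same difference quotient underlying Lemma~\ref{lem-opoly2}, so your route also makes transparent why that criterion and the present one are interchangeable. The one point to flag is the normalization: from ``$f(0)=0$ and $f_u$ is $2$-to-$1$ for all $u\in\gf_q^*$'' one cannot deduce $f(1)=1$ (for instance $f(x)=cx^2$ with $c\notin\{0,1\}$ satisfies both hypotheses but is not an oval polynomial under the paper's Definition), so the ``if'' direction as literally stated is false against that definition unless one rescales. This is a looseness inherited from the literature, where o-polynomials are only determined up to the scaling enforcing $f(1)=1$, and not a defect of your reasoning; but your parenthetical ``which the translations leave untouched'' glosses over the issue rather than resolving it, and a careful write-up should either add the normalization as a hypothesis or state the equivalence up to scaling.
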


\begin{lemma}\cite{Wqiu}\label{lem-opoly2}
Let $q=2^m$ with $m\geq 2$. Then $f$ is an oval polynomial over $\gf_q$ if and only if the following two conditions hold:
\begin{enumerate}
\item $f$ is a permutation polynomial of $\gf_q$;
\item
$$
\frac{f(x)+f(y)}{x+y} \neq \frac{f(x)+f(z)}{x+z}
$$
for all pairwise different elements $x, y, z$ in $\gf_q$.
\end{enumerate}
\end{lemma}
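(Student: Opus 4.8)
The plan is to reduce the whole equivalence to a single reformulation of the second defining property of an oval polynomial. For a fixed $a\in\gf_q$ I will analyse $g_a(x)=(f(x+a)+f(a))x^{q-2}$ directly as a function on $\gf_q$, show that ``$g_a$ is a permutation polynomial'' is the same as ``$g_a$ is injective on $\gf_q^*$'', then apply a linear change of variables that turns this injectivity into precisely the inequality in condition~(2), and finally let $a$ run over $\gf_q$ so that the quantifier over $a$ matches the quantifier over pairwise-distinct triples $(x,y,z)$; geometrically this is just the translation of ``no three points of the associated hyperoval are collinear''. With this equivalence in hand, one direction of the lemma is immediate --- an oval polynomial satisfies condition~(1) by definition --- and the other direction is obtained by reading the equivalence backwards.

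First I would record the elementary behaviour of $g_a$. Since $q=2^m\geq 4$ we have $0^{q-2}=0$ and $x^{q-2}=x^{-1}$ for $x\in\gf_q^*$, so $g_a(0)=0$ and $g_a(x)=(f(x+a)+f(a))/x$ for $x\neq 0$; and because $f$ is a permutation (condition~(1)), $f(x+a)\neq f(a)$ whenever $x\neq 0$, so $g_a(x)\neq 0$ on $\gf_q^*$. Hence $g_a$ fixes $0$ and maps $\gf_q^*$ into itself, and therefore $g_a$ is a permutation polynomial of $\gf_q$ if and only if it is injective on $\gf_q^*$. Injectivity on $\gf_q^*$ means $(f(u+a)+f(a))/u\neq (f(v+a)+f(a))/v$ for all distinct $u,v\in\gf_q^*$; setting $y=u+a$, $z=v+a$ (equivalently $u=a+y$, $v=a+z$) and renaming $a$ as $x$, this reads $\frac{f(x)+f(y)}{x+y}\neq\frac{f(x)+f(z)}{x+z}$ for all $y,z\in\gf_q\setminus\{x\}$ with $y\neq z$. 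Letting $a$, equivalently $x$, range over $\gf_q$ now produces exactly condition~(2), proving both implications at once: if $f$ is oval then (1) holds by definition and (2) follows by taking $a=x$ in the displayed computation; conversely, (1) and (2) force every $g_a$ to be injective on $\gf_q^*$, hence a permutation polynomial, so $f$ is oval (as usual, $f$ is taken with $f(0)=0$, $f(1)=1$, $\deg f<q$; cf.\ Lemma~\ref{lem-opoly1}). Alternatively, one could route the argument through Lemma~\ref{lem-opoly1}, observing that for $f$ a permutation condition~(2) is equivalent to $f_u=f(x)+ux$ being $2$-to-$1$ for every $u\in\gf_q^*$, since for fixed $x$ the map $y\mapsto (f(x)+f(y))/(x+y)$ is a bijection from $\gf_q\setminus\{x\}$ onto $\gf_q^*$.

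I do not anticipate a serious obstacle: the argument is essentially a change-of-variables bookkeeping. The step needing the most care is the handling of $x=0$ in $x^{q-2}$ together with the claim that $g_a$ has no zero on $\gf_q^*$ --- this is the one place where condition~(1) is genuinely used, and without it the reduction ``$g_a$ permutation $\Leftrightarrow$ $g_a$ injective on $\gf_q^*$'' would fail. The only other point to verify scrupulously is that $u\mapsto a+u$ is a bijection between the index sets involved, so that running $a$ over $\gf_q$ recovers every pairwise-distinct triple $(x,y,z)$ in condition~(2) exactly once, with none omitted and none spuriously repeated.
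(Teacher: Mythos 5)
The paper states this lemma as a citation from \cite{Wqiu} and gives no proof of it, so there is nothing internal to compare against; judged on its own, your argument is correct and essentially the standard derivation of this reformulation directly from the definition. The key steps all check out: since $q\geq 4$, $0^{q-2}=0$ and $x^{q-2}=x^{-1}$ on $\gf_q^*$, so $g_a$ fixes $0$; condition (1) guarantees $f(x+a)\neq f(a)$ for $x\neq 0$, so $g_a$ maps $\gf_q^*$ into $\gf_q^*$ and is therefore a permutation iff injective there; and the characteristic-$2$ substitution $u=x+y$, $v=x+z$ with $a=x$ converts that injectivity into exactly the slope inequality of condition (2), with the quantifier over $a\in\gf_q$ matching the quantifier over which element of the triple plays the role of $x$. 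The only caveat worth making explicit is the one you already flag parenthetically: as literally stated, condition (1) of the lemma omits the normalizations $f(0)=0$, $f(1)=1$, $\deg f<q$ that appear in the paper's definition of an oval polynomial, so the backward implication is really proved modulo those standing assumptions (exactly as in Lemma \ref{lem-opoly1}, which likewise assumes $f(0)=0$). Your alternative route via Lemma \ref{lem-opoly1} is also sound --- $(f(x)+f(y))/(x+y)=u$ is equivalent to $f_u(x)=f_u(y)$, so condition (2) is the statement that each $f_u$ is at most $2$-to-$1$, hence exactly $2$-to-$1$ by counting --- but the direct computation you give first is cleaner and needs no auxiliary result.
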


\begin{lemma}\cite{Wqiu}\label{lem-opoly3}
Let $q=2^m$ with $m \geq 3$ being odd and $f(x)$ be an oval polynomial over $\gf_q$ whose coefficients are in $\gf_2$. Then $f(x)+x+1=0$ has no solution in $\gf_q$.
\end{lemma}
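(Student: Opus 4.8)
The plan is to argue by contradiction, exploiting the fact that an oval polynomial with all coefficients in $\gf_2$ commutes with the Frobenius automorphism, and then to invoke the $2$-to-$1$ characterization of Lemma~\ref{lem-opoly1}. Suppose, for a contradiction, that some $\alpha \in \gf_q$ satisfies $f(\alpha) + \alpha + 1 = 0$, i.e.\ $f(\alpha) = \alpha + 1$. Since $f(0) = 0$ and $f(1) = 1$, we get immediately $\alpha \neq 0$ and $\alpha \neq 1$, so $\alpha \notin \gf_2$ and in particular $\alpha^2 \neq \alpha$.

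Because every coefficient of $f$ lies in $\gf_2$, we have $f(x)^2 = f(x^2)$ identically over $\gf_q$ (freshman's dream together with $a^2=a$ for $a\in\gf_2$). Squaring the relation $f(\alpha) = \alpha+1$ therefore gives $f(\alpha^2) = \alpha^2 + 1$, and iterating shows that every conjugate $\alpha^{2^i}$ is again a root of $f(x)+x+1$. Now consider $f_1(x) := f(x) + x$, which is the map $f_u$ of Lemma~\ref{lem-opoly1} for the admissible choice $u = 1 \in \gf_q^*$. For any root $\beta$ of $f(x)+x+1$ one computes, in characteristic two, $f_1(\beta) = f(\beta)+\beta = (\beta+1)+\beta = 1$; hence the two \emph{distinct} elements $\alpha$ and $\alpha^2$ both lie in the fibre $f_1^{-1}(1)$. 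By Lemma~\ref{lem-opoly1} the map $f_1$ is $2$-to-$1$, so this (nonempty) fibre has exactly two elements, forcing $f_1^{-1}(1) = \{\alpha, \alpha^2\}$. Since $\alpha^4$ also belongs to this fibre, we get $\alpha^4 \in \{\alpha, \alpha^2\}$; the case $\alpha^4 = \alpha^2$ would give $\alpha^2=1$, i.e.\ $\alpha = 1$, which is excluded, so necessarily $\alpha^4 = \alpha$, i.e.\ $\alpha^3 = 1$.

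Finally, $\ord(\alpha)$ must divide $\gcd(3, q-1)$. When $m$ is odd, $q = 2^m \equiv 2 \pmod 3$, hence $q - 1 \equiv 1 \pmod 3$ and $\gcd(3, q-1) = 1$; therefore $\alpha = 1$, contradicting $\alpha \neq 1$. This contradiction proves that $f(x) + x + 1 = 0$ has no solution in $\gf_q$. The only delicate point is the bookkeeping around the $2$-to-$1$ property: one must check that the fibre over $1$ is genuinely of size two and that the whole Frobenius orbit of $\alpha$ cannot then exceed two elements, after which the parity hypothesis on $m$ (equivalently, $\gf_4 \not\subseteq \gf_{2^m}$ for odd $m$) closes the argument. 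Alternatively, the same computation runs through Lemma~\ref{lem-opoly2}: the slopes $\bigl(f(\alpha)+f(\alpha^{2^i})\bigr)/\bigl(\alpha+\alpha^{2^i}\bigr)$ all equal $1$, so $\alpha,\alpha^2,\alpha^4$ cannot be pairwise distinct, yielding $\alpha^3=1$ and the same contradiction.
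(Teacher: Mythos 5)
The paper does not prove this lemma at all: it is imported verbatim from \cite{Wqiu}, so there is no in-paper argument to compare yours against. Judged on its own terms, your proof is correct and complete. The key chain of deductions all checks out: $\alpha\notin\gf_2$ from $f(0)=0$, $f(1)=1$; the identity $f(x)^2=f(x^2)$ from the coefficients lying in $\gf_2$, which makes the root set of $f(x)+x+1$ stable under Frobenius; the observation that every such root lies in the fibre $f_1^{-1}(1)$ of the $2$-to-$1$ map $f_1=f(x)+x$ of Lemma~\ref{lem-opoly1}, so that fibre is exactly $\{\alpha,\alpha^2\}$ and must absorb $\alpha^4$; the elimination of $\alpha^4=\alpha^2$ using $\alpha\neq 0,1$; and finally $\alpha^3=1$ together with $\gcd(3,2^m-1)=1$ for odd $m$ forcing $\alpha=1$, a contradiction. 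A slightly cleaner way to phrase the endgame, which you essentially have, is that the Frobenius orbit of $\alpha$ is trapped in a two-element fibre, so $\alpha^4=\alpha$ means $\alpha\in\gf_4$, and $\gf_4\cap\gf_{2^m}=\gf_2$ for odd $m$; this makes transparent exactly where the parity hypothesis on $m$ enters. Your alternative route through the slope condition of Lemma~\ref{lem-opoly2} (all secant slopes through conjugate roots equal $1$, so $\alpha,\alpha^2,\alpha^4$ cannot be pairwise distinct) is equally valid and arguably closer in spirit to how oval polynomials are used elsewhere in Theorem~\ref{thm-N}. Either version would serve as a legitimate self-contained proof of the cited lemma.
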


\section{A Construction of NMDS codes}\label{sec3}
In this section, let $q=2^m$ where $m$ is an odd integer with $m \geq 3$. For convenience, let $\dim(\cC)$ and $d(\cC)$ respectively denote the dimension and minimal distance of a linear code $\cC$. Let $f$ be an oval polynomial over $\gf_q$. Let $\alpha_0=0,\alpha_1=1,\cdots,\alpha_{q-1}$ be all elements of $\gf_q$.
Define
\begin{eqnarray*}
G=\left[
\begin{array}{lllllllll}
1 & 1 & \cdots & 1  & 0 & 0 & 1 & 0 & 1\\
\alpha_0 & \alpha_1 & \cdots & \alpha_{q-1} & 0 & 1 & 0 & 1 & 1\\
f(\alpha_0) & f(\alpha_1) & \cdots & f(\alpha_{q-1}) & 1 & 0 & 1 & 1 &0
\end{array}
\right].
\end{eqnarray*}
Then $G$ is a $3$ by $q+5$ matrix over $\gf_q$. Let $G$ generate a linear code $\C$ over $\gf_q$.
The parameters and weight enumerator of $\C$ are determined in the following theorem.

\begin{theorem}\label{thm-N}
Let $m$ be an odd integer with $m \geq 3$, and let $f$ be an oval polynomial over $\gf_q$. Then $\cC$ is a $[q+5, 3, q+2]$ NMDS code over $\gf_q$ with  weight enumerator
\begin{eqnarray*}
A(z)=1 + \frac{(q-1)(3q+8)}{2} z^{q+2} + \frac{(q-2)(q-1)(q+2)}{2} z^{q+3} + \\
\frac{3(q^2-3q+2)}{2} z^{q+4} + \frac{(q-1)(q-2)^2}{2} z^{q+5}.
\end{eqnarray*}
\end{theorem}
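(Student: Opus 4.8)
The plan is to regard the columns of $G$ as points of $\PG(2,q)$ and to read off every parameter from the incidences of these points with the lines of the plane. Write the first $q$ columns as $P_i=(1,\alpha_i,f(\alpha_i))$, $0\le i\le q-1$, and the last five as $Q_1=(0,0,1)$, $Q_2=(0,1,0)$, $Q_3=(1,0,1)$, $Q_4=(0,1,1)$, $Q_5=(1,1,0)$. First I would verify, using $f(0)=0$, $f(1)=1$ and the distinctness of the $\alpha_i$, that these $q+5$ points are nonzero and pairwise non-proportional; since $Q_1,Q_2,Q_5$ alone are linearly independent this gives $\dim(\cC)=3$, and the pairwise non-proportionality immediately yields $d(\cC^\perp)\ge 3$ (no codeword of $\cC^\perp$ has weight $1$ or $2$).

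The structural heart of the argument is that $\mathcal H:=\{P_0,\dots,P_{q-1}\}\cup\{Q_1,Q_2\}$ is a hyperoval of $\PG(2,q)$, i.e.\ $q+2$ points no three collinear. This is essentially the defining property of an oval polynomial: by Lemma~\ref{lem-opoly1} the map $f_u(x)=f(x)+ux$ is $2$-to-$1$ for each $u\in\gf_q^*$, so a line $ux+vy+wz=0$ meets $\{P_i\}$ in at most $2$ points (exactly $0$ or $2$ when $vw\neq0$, at most $1$ otherwise), and adjoining $Q_1$ (on the line iff $w=0$) and $Q_2$ (iff $v=0$) never raises the count above $2$. Next, since $Q_3+Q_4=Q_5$ the points $Q_3,Q_4,Q_5$ all lie on the unique line $\ell:\ x+y+z=0$; as $Q_1,Q_2\notin\ell$ and, by Lemma~\ref{lem-opoly3}, $f(x)+x+1$ has no root in $\gf_q$, the line $\ell$ misses $\mathcal H$ entirely. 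Putting these together: any line contains at most $2$ points of $\mathcal H$, and any line containing two of $Q_3,Q_4,Q_5$ equals $\ell$ and hence contains no point of $\mathcal H$; so no line contains more than $3$ of our $q+5$ points, while $\ell$ contains exactly $3$. Reading this off $G$, $d(\cC)=(q+5)-3=q+2$, so $\cC$ is AMDS; combined with $d(\cC^\perp)=3$ (the collinear triple $\{Q_3,Q_4,Q_5\}$ gives a weight-$3$ codeword of $\cC^\perp$), both $\cC$ and $\cC^\perp$ are AMDS, i.e.\ $\cC$ is NMDS.

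For the weight enumerator, observe that a codeword of $\cC$ of weight $q+5-j$ corresponds to a line meeting our point set $\mathcal P=\mathcal H\cup\{Q_3,Q_4,Q_5\}$ in exactly $j$ points, and that $\mathbf{v}\mapsto\mathbf{v}^{\perp}$ is $(q-1)$-to-$1$; hence $A_{q+5-j}=(q-1)L_j$, where $L_j$ is the number of such lines and $L_j=0$ for $j\ge4$. Since no three points of $\mathcal H$ are collinear, $L_3$ equals the number of collinear triples of $\mathcal P$, and every such triple is either $\{Q_3,Q_4,Q_5\}$ or of the form $\{Q_j,P,P'\}$ with $j\in\{3,4,5\}$ and $P,P'\in\mathcal H$; as each point off a hyperoval lies on exactly $(q+2)/2$ secant lines (this is where $q$ even enters), we get $L_3=1+3\cdot\frac{q+2}{2}=\frac{3q+8}{2}$ and $A_{q+2}=\frac{(q-1)(3q+8)}{2}$. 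The remaining coefficients then follow either by the same bookkeeping ($L_2+3L_3=\binom{q+5}{2}$ and $\sum_j jL_j=(q+5)(q+1)$ determine $L_2$ and $L_1$, and $L_0=q^2+q+1-L_1-L_2-L_3$) or, more cleanly, by substituting $A_{q+2}$ into the recurrence of Lemma~\ref{lem-N1} with $n=q+5$, $k=3$.

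The step I expect to be the main obstacle is the collinearity bookkeeping that simultaneously pins down $d(\cC)=q+2$ and the triple count $L_3$: one must marry the hyperoval property of $\mathcal H$, the exact number $(q+2)/2$ of secants of a hyperoval through an exterior point, and --- crucially --- the nonexistence of a root of $f(x)+x+1$ from Lemma~\ref{lem-opoly3} (valid for oval polynomials with coefficients in $\gf_2$, which covers the translation, Segre, Glynn, Cherowitzo and Payne families), since this last fact is exactly what prevents the line $\ell$ from absorbing a fourth point and collapsing the minimum distance to $q+1$. Everything after $A_{q+2}$ is routine algebra.
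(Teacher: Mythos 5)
Your proof is correct, and it takes a genuinely different route from the paper's. The paper proceeds by brute force: it enumerates all types of $3$-subsets of columns of $G$ (thirteen determinant cases) to count $A_3^\perp$ directly, runs a separate four-case argument to show $d(\cC)\ge q+2$, and only then invokes Lemma~\ref{lem-N2} to transfer $A_3^\perp$ to $A_{q+2}$ and Lemma~\ref{lem-N1} for the rest. You instead recognize the column set as the hyperoval $\mathcal{H}=\{(1,t,f(t))\}\cup\{(0,0,1),(0,1,0)\}$ together with the three points $(1,0,1),(0,1,1),(1,1,0)$ of the line $\ell: x+y+z=0$, which misses $\mathcal{H}$ precisely because of Lemma~\ref{lem-opoly3}; a single collinearity analysis then yields $d(\cC)=q+2$ and the full line-intersection distribution $(L_0,\dots,L_3)$, hence the whole weight enumerator, at once. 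Your count $L_3=1+3\cdot\frac{q+2}{2}=\frac{3q+8}{2}$ agrees with the paper's case-by-case tally (its Cases 1.4--1.6 and 1.8--1.13 are exactly your secants through the three external points plus $\ell$ itself), and your derived values of $A_{q+3}$, $A_{q+4}$, $A_{q+5}$ match the stated enumerator. What your approach buys is brevity and an explanation of \emph{why} the answer is $\frac{(q-1)(3q+8)}{2}$ (each exterior point of a hyperoval lies on $(q+2)/2$ secants, no tangents in even characteristic); what it costs is reliance on standard but uncited finite-geometry facts, where the paper stays with elementary determinants. One point worth keeping explicit, which you do flag: Lemma~\ref{lem-opoly3} needs $f$ to have coefficients in $\gf_2$, an assumption absent from the theorem statement but used equally by the paper's own proof (Cases 1.4, 1.6, 1.12), so this is not a gap relative to the paper.
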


\begin{proof}
It is easy to deduce that $\dim(\cC)=3$ as the first, $q+1$-th and $q+2$-th columns of the generator matrix $G$ are linearly independent. We then prove that $\cC^\perp$ has parameters $[q+5, q+2, 3]$. Obviously, $\dim(\cC^\perp)=(q+5)-3=q+2$. Since each column of $G$ is nonzero and any two columns of $G$ are linearly independent over $\gf_q$, we have $d(\cC^\perp) > 2$. Note that the $q+1$-th, $q+2$-th, $q+4$-th columns of $G$ are linearly dependent. Then  $d(\cC^\perp) = 3$.

To calculate the total number of codewords of weight $3$ in $\cC^\perp$, we consider the following cases.

{Case 1.1:} Let $x, y, z$ be three pairwise different elements in $\gf_q$. Consider the submatrix
\begin{eqnarray*}
M_{1,1}=\left[
\begin{array}{lll}
1 & 1 & 1 \\
x & y & z \\
f(x) & f(y) & f(z)
\end{array}
\right].
\end{eqnarray*}
 Then $|M_{1,1}|=(x+y)(f(x)+f(z))+(x+z)(f(x)+f(y))\neq 0$ by Lemma \ref{lem-opoly2}. Hence, $\cC^\perp$ has no codeword of weight $3$ whose nonzero coordinates are at the first $q$ locations.

{Case 1.2:} Let $x, y$ be two different elements in $\gf_q$. Consider the submatrix
\begin{eqnarray*}
M_{1,2}=\left[
\begin{array}{lll}
1 & 1 & 0 \\
x & y & 0 \\
f(x) & f(y) & 1
\end{array}
\right].
\end{eqnarray*}
Then $|M_{1,2}|=y+x \neq 0$  as $x \neq y$.  Hence, $\cC^\perp$ has no codeword of weight $3$ whose first two nonzero coordinates are at the first $q$ locations and the rest is at the $q+1$-th location.

{Case 1.3:} Let $x, y$ be two different elements in $\gf_q$. Consider the submatrix
\begin{eqnarray*}
M_{1,3}=\left[
\begin{array}{lll}
1 & 1 & 0 \\
x & y & 1 \\
f(x) & f(y) & 0
\end{array}
\right].
\end{eqnarray*}
Since $f$ is a permutation polynomial and $x \neq y$, then $|M_{1,3}|=f(y)+f(x) \neq 0$. Hence, $\cC^\perp$ has no codeword of weight $3$ whose first two nonzero coordinates are at the first $q$ locations and the rest is at the $q+2$-th location.

{Case 1.4:} Let $x, y$ be two different elements in $\gf_q$. Consider the submatrix
\begin{eqnarray*}
M_{1,4}=\left[
\begin{array}{lll}
1 & 1 & 1 \\
x & y & 0 \\
f(x) & f(y) & 1
\end{array}
\right].
\end{eqnarray*}
Then $|M_{1,4}|=(f(y)+1)x+(f(x)+1)y$. If $(x,y)=(0,1)$ or $(1,0)$, then $|M_{1,4}| \neq 0$ and $\cC$ has no codeword of weight $3$ whose coordinates are at the first, second and $q+3$-th locations. Now we count the number of different pairs $(x,y)$ such that $|M_{1,4}|=0$, where $x,y \in \gf_q \setminus \{0,1\}$ . For any $x \in \gf_q \setminus \{0,1\}$, $|M_{1,4}|=0$ if and only if
$$\frac{f(x)+1}{x}=\frac{f(y)+1}{y}.$$
Let $a:=\frac{f(x)+1}{x}$. Then $a \neq 0$ and $a \neq 1$ by Lemma \ref{lem-opoly3}. Since $f(z)+az$ is $2$-to-$1$ by Lemma \ref{lem-opoly1}, there exists an unique element $y \in \gf_q \setminus \{0,1\}$ such that $f(x)+ax=1=f(y)+ay$. For this pair $(x, y)$, $|M_{1,4}|=0$ and vice versa. Hence, the number of distinct $(x,y) \in \gf_q \setminus \{0,1\}$ such that $|M_{1,4}|=0$ equals $(q-2)/2$. As a result, the number of codewords of weight $3$ in $\cC^\perp$ whose first two nonzero coordinates are at the first $q$ locations (expect the first two locations) and the rest is at the $q+3$-th location is equal to $(q-2)(q-1)/2$.

{Case 1.5:} Let $x, y$ be two distinct elements in $\gf_q$. Consider the submatrix
\begin{eqnarray*}
M_{1,5}=\left[
\begin{array}{lll}
1 & 1 & 0 \\
x & y & 1 \\
f(x) & f(y) & 1
\end{array}
\right].
\end{eqnarray*}
Then $|M_{1,5}|=f(x)+f(y)+x+y$. $|M_{1,5}|=0$ is equal to $f(x)+x=f(y)+y$. Note that $f(z)+z$ is $2$-to-$1$ by Lemma \ref{lem-opoly1}. If we fix $x \in \gf_q$, there exists an unique element $y \in \gf_q$ such that $f(x)+x=a=f(y)+y$, where $a \in \gf_q$. For this pair $(x, y)$, $|M_{1,5}|=0$ and vice versa. Then  the number of $(x,y)$ in $\gf_q$ such that $|M_{1,5}|=0$ equals $q/2$. In conclusion, the number of codewords of weight $3$ in $\cC^\perp$ whose first two nonzero coordinates are at the first $q$ locations and the rest is at the $q+4$-th location is equal to $q(q-1)/2$.

{Case 1.6:} Let $x, y$ be two different elements in $\gf_q$. Consider the submatrix
\begin{eqnarray*}
M_{1,6}=\left[
\begin{array}{lll}
1 & 1 & 1 \\
x & y & 1 \\
f(x) & f(y) & 0
\end{array}
\right].
\end{eqnarray*}
Then $|M_{1,6}|=(x+1)f(y)+(y+1)f(x)$. For any $y \in \gf_q \setminus \{0,1\}$, let $a:=f(y)/(y+1)$ which implies $f(y)+ay=a$. Then $a \neq 0$ and $a \neq 1$ by Lemma \ref{lem-opoly3}. By Lemma \ref{lem-opoly1}, $f(z)+az$ is $2$-to-$1$. Then there exists an unique element $x \in \gf_q \setminus \{0,1\}$ such that $f(x)+ax=a$. For this pair $(x, y)$, $|M_{1,6}|= 0$ and and vice versa. Hence, the number of  $(x, y)$ in $\gf_q \setminus \{0,1\}$ satisfying $|M_{1,6}|=0$ equals $(q-2)/2$. Consequently, the number of codewords of weight $3$ in $\cC^\perp$ whose first two nonzero coordinates are at the first $q$ locations (expect the first two locations) and the rest is at the $q+5$-th location is equal to $(q-2)(q-1)/2$.

{Case 1.7:} Let $x$ be an element in $\gf_q$. Consider the following three submatrixes as
\begin{eqnarray*}
M_{1,7}=\left[
\begin{array}{lll}
1 & 0 & 0 \\
x & 0 & 1 \\
f(x) & 1 & 0
\end{array}
\right],\
M_{1,8}=\left[
\begin{array}{lll}
1 & 0 & 0 \\
x & 0 & 1 \\
f(x) & 1 & 1
\end{array}
\right],
\end{eqnarray*}
\begin{eqnarray*}
M_{1,9}=\left[
\begin{array}{lll}
1 & 0 & 0 \\
x & 1 & 1 \\
f(x) & 0 & 1
\end{array}
\right].
\end{eqnarray*}
Then we have $|M_{1,7}|=|M_{1,8}|=|M_{1,9}|=1$.  Hence, $\cC^\perp$ has no codeword of weight $3$ whose first nonzero coordinate is at the first $q$ locations and the others are at the $u$-th and $v$-th locations, where $(u,v)=(q+1,q+2)$ or $(q+1,q+4)$ or $(q+2,q+4)$.

{Case 1.8:} Let $x$ be an element in $\gf_q$. Consider the submatrix
\begin{eqnarray*}
M_{1,10}=\left[
\begin{array}{lll}
1 & 0 & 1 \\
x & 0 & 0 \\
f(x) & 1 & 1
\end{array}
\right].
\end{eqnarray*}
Then $|M_{1,10}|=x$. $|M_{1,10}|=0$ if and only if $x=0$. Consequently, the number of codewords of weight $3$ in $\cC^\perp$ whose nonzero coordinates are at the first, $q+1$-th and $q+3$-th locations is equal to $q-1$.

{Case 1.9:} Let $x$ be an element in $\gf_q$. Consider the submatrix
\begin{eqnarray*}
M_{1,11}=\left[
\begin{array}{lll}
1 & 0 & 1 \\
x & 0 & 1 \\
f(x) & 1 & 0
\end{array}
\right].
\end{eqnarray*}
Then $|M_{1,11}|=x+1$. $|M_{1,11}|=0$ if and only if $x=1$. Consequently, the number of codewords of weight $3$ in $\cC^\perp$ whose nonzero coordinates are at the second, $q+1$-th and $q+5$-th locations equals $q-1$.

{Case 1.10:} Let $x$ be an element in $\gf_q$. Consider the submatrix
\begin{eqnarray*}
M_{1,12}=\left[
\begin{array}{lll}
1 & 0 & 1 \\
x & 1 & 0 \\
f(x) & 0 & 1
\end{array}
\right].
\end{eqnarray*}
Then $|M_{1,12}|=f(x)+1$. Since $f$ is a permutation polynomial of $\gf_q$ with $f(0)=0$, $f(1)=1$, then $|M_{1,12}|=0$ if and only if $x=1$. Consequently, the number of codewords of weight $3$ in $\cC^\perp$ whose nonzero coordinates are at the second, $q+2$-th and $q+3$-th locations equals $q-1$.

{Case 1.11:} Let $x$ be an element in $\gf_q$. Consider  the submatrix
\begin{eqnarray*}
M_{1,13}=\left[
\begin{array}{lll}
1 & 0 & 1 \\
x & 1 & 1 \\
f(x) & 0 & 0
\end{array}
\right].
\end{eqnarray*}
Then $|M_{1,13}|=f(x)$. Since $f$ is a permutation polynomial of $\gf_q$ with $f(0)=0$, $f(1)=1$, then $|M_{1,13}|=0$ if and only if $x=0$. Consequently, the number of codewords of weight $3$ in $\cC^\perp$ whose nonzero coordinates are at the first, $q+1$-th and $q+5$-th locations is equal to $q-1$.

{Case 1.12:} Let $x$ be an element in $\gf_q$. Consider the following three submatrixes as
\begin{eqnarray*}
M_{1,14}=\left[
\begin{array}{lll}
1 & 1 & 0 \\
x & 0 & 1 \\
f(x) & 1 & 1
\end{array}
\right],
M_{1,15}=\left[
\begin{array}{lll}
1 & 1 & 1 \\
x & 0 & 1 \\
f(x) & 1 & 0
\end{array}
\right],
\end{eqnarray*}
\begin{eqnarray*}
M_{1,16}=\left[
\begin{array}{lll}
1 & 1 & 0 \\
x & 1 & 1 \\
f(x) & 0 & 1
\end{array}
\right].
\end{eqnarray*}
Then we have $|M_{1,14}|=|M_{1,15}|=|M_{1,16}|=f(x)+x+1$. By Lemma \ref{lem-opoly3}, $f(x)+x+1\neq 0$ for $x\in \gf_q$. Hence, $\cC^\perp$ has no codeword of weight $3$ whose first nonzero coordinate is at the first $q$ locations and the others are at the $t_1$-th and $t_2$-th locations, where
$(t_1,t_2)=(q+3,q+4)$ or $(q+3,q+5)$ or $(q+4,q+5)$.

{Case 1.13:} Let the nonzero coordinates of the codewords with weight $3$ in $\cC^\perp$ be at three of the last five locations.
Then there are ten subcases.
Here, we only discuss two subcases of them as the others can be discussed in a similar way.
Consider the following two submatrixes as
\begin{eqnarray*}
M_{1,17}=\left[
\begin{array}{lll}
0 & 0 & 0 \\
0 & 1 & 1 \\
1 & 0 & 1
\end{array}
\right],
M_{1,18}=\left[
\begin{array}{lll}
1 & 0 & 1 \\
0 & 1 & 1 \\
1 & 1 & 0
\end{array}
\right].
\end{eqnarray*}
The ranks of these submatrixes are all $2$. Hence, the number of codewords of weight $3$ in $\cC^\perp$ whose nonzero coordinates are at the $q+1$-th, $q+2$-th and $q+4$-th locations is equal to $q-1$ and the number of codewords of weight $3$ in $\cC^\perp$ whose nonzero coordinates are at the last three locations is equal to $q-1$.
For other subcases, the corresponding number of codewords of weight $3$ in $\cC^\perp$  is 0.

Thanks to the discussions in the above cases, we deduce that $A_3^\perp=\frac{(q-1)(3q+8)}{2}$.

We finally prove that $d(\cC)=q+2$. Assume that $d(\cC) \leq q+1=q+5-4$. Let $\bc=a \bg_1 + b \bg_2 + c \bg_3$ be a codeword with the minimum weight in $\cC$, where $\bg_1$, $\bg_2$ and $\bg_3$ respectively represent the first, second and third rows of $G$. Then at least four coordinates are zero in $\bc$.

{Case 2.1:} Assume that four of the last five coordinates in $\bc$ are zero. Here we suppose the $q+1$-th, $q+2$-th, $q+3$-th, $q+4$-th coordinates in $\bc$ are zero. Then we have
\begin{eqnarray*}
\left\{
\begin{array}{r}
c=0, \\
b=0, \\
a+c=0, \\
b+c=0.  \\
\end{array}
\right.
\end{eqnarray*}
Then $a=b=c=0$ and $\bc=0$, which contradicts to the fact that $\bc$ is a minimum weight codeword in $\cC$.
For other subcases, we can similarly obtain this contradiction.

{Case 2.2:} Assume that three of the last five coordinates in $\bc$ are zero. Here we suppose the $q+1$-th, $q+2$-th, $q+3$-th  coordinates in $\bc$ are zero. Then there exists an element $x$ in $\gf_q$ such that
\begin{eqnarray*}
\left\{
\begin{array}{r}
a+bx+cf(x) = 0, \\
c=0, \\
b=0, \\
a+c=0.  \\
\end{array}
\right.
\end{eqnarray*}
Then $a=b=c=0$ and $\bc=0$. This is contrary to the fact that $\bc$ is a minimum weight codeword in $\cC$. For other subcases, we can similarly obtain this contradiction.

{Case 2.3:} Assume that two of the last five coordinates in $\bc$ are zero. Here we suppose the $q+1$-th, $q+2$-th of the last five coordinates in $\bc$ are zero. Then there exist two different elements $x$ and $y$ in $\gf_q$ such that
\begin{eqnarray*}
\left\{
\begin{array}{r}
a+bx+cf(x) = 0, \\
a+by+cf(y) = 0, \\
c=0,  \\
b=0. \\
\end{array}
\right.
\end{eqnarray*}
Then $a=b=c=0$ and $\bc=0$. This is contrary to the fact that $\bc$ is a minimum weight codeword in $\cC$. For other subcases, we can similarly obtain this contradiction.

{Case 2.4:} Assume that at most one of the last four coordinates in $\bc$ is zero. Let $x, y, z$ be three pairwise different elements in $\gf_q$ such that
\begin{eqnarray*}
\left\{
\begin{array}{r}
a+bx+cf(x) = 0, \\
a+by+cf(y) = 0, \\
a+bz+cf(z) = 0. \\
\end{array}
\right.
\end{eqnarray*}
By Lemma \ref{lem-opoly2},  we can deduce that the rank of the coefficient matrix for this system of equations is $3$. Hence, $a=b=c=0$ and $\bc=0$, which is contrary to the fact that $\bc$ is a minimum weight codeword in $\cC$.

Summarizing the above discussions, $d(\cC) \geq q+2$. By the Singleton bound, $d(\cC) \leq q+3$. If $d(\cC) = q+3$, then $\cC$ is a $[q+5, 3, q+3]$ MDS code whose dual is also an MDS code, which contradicts to the fact that $\C^\perp$ is AMDS. Then $d(\cC)=q+2$, and $\cC$ is a $[q+5, 3, q+2]$ NMDS code. By Lemma \ref{lem-N2}, $A_{q+2}=A_3^\perp=\frac{(q-1)(3q+8)}{2}$. Then the weight enumerator of $\cC$ follows form Lemma \ref{lem-N1}.
\end{proof}

\begin{example}
Let $m=3$ and $f(x)=x^4$. Then the code $\cC$ over $\gf_q$ has parameters $[13,3,10]$ and weight enumerator $A(z)=1 + 112z^{10} + 210z^{11} + 63z^{12} +126z^{13}.$
\end{example}

With the first seven families of oval polynomials documented in Theorem \ref{thm-knownopolys},  we have derived  seven infinite
families of near MDS codes over $\gf(q)$ with parameters $[q+5, 3, q+2]$ via Theorem \ref{thm-N}.
This construction may not work for the Subiaco and Adelaide oval polynomials in general.

\section{Optimal locally recoverable codes}\label{sec4}
In this section, we prove that the NMDS code  in Theorem \ref{thm-N} and its dual are both distance-optimal and dimension-optimal LRCs.

For an $[n,k,d]$ linear code $\C$, denote by $\mathcal{B}_{d}(\mathcal{C})$  the set of the supports of all codewords with weight $d$ in  $\mathcal{C}$.
Denote by $d^{\perp}=d(\mathcal{C}^{\perp})$.
Besides, the coordinates of the codewords are indexed with $(0,1,\cdots,n-1)$.

\begin{lemma}[\cite{TanP}]\label{lem-R1114}
Let $\cC$ be a linear code with length $n$ and $d(\C)>1$. Then the minimum linear locality of $\cC$ equals $d^{\perp}-1$ if and only if
$$
\mathop{\bigcup}_{\mathcal{S} \in \mathcal{B}_{d^{\perp}}(\mathcal{C}^{\perp})} \mathcal{S}=[n].
$$
\end{lemma}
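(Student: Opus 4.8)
The plan is to recast linear locality in terms of the dual code $\cC^\perp$ and then compare it with the displayed covering condition. First I would recall the standard dictionary: for a linear code, a coordinate $c_i$ satisfies a linear relation $c_i=\sum_{j\in R}a_jc_j$ (valid for all $\bc\in\cC$) with $R\subseteq[n]\setminus\{i\}$ if and only if the vector $\mathbf{e}_i-\sum_{j\in R}a_j\mathbf{e}_j$ lies in $\cC^\perp$, equivalently if and only if there is a codeword $\mathbf{h}\in\cC^\perp$ with $h_i\neq 0$ and $\support(\mathbf{h})\subseteq R\cup\{i\}$ (one direction scales $\mathbf{h}$ so that its $i$-th entry is $1$). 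Hence the smallest possible size of a linear repair set for the coordinate $i$ is
$$
r_i:=\min\{\wt(\mathbf{h}):\mathbf{h}\in\cC^\perp,\ h_i\neq 0\}-1 ,
$$
and, writing $r(\cC)$ for the minimum linear locality of $\cC$, we get $r(\cC)=\max_{i\in[n]}r_i$: a repair set of size below a prescribed value may always be padded with coordinates assigned coefficient $0$, since $r_i\le n-1$ for every $i$.

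Next I would verify that each $r_i$ is well defined, i.e.\ that some codeword of $\cC^\perp$ is nonzero in the coordinate $i$. If this failed for some $i$, the $i$-th column of any parity-check matrix of $\cC$ would be zero, forcing $\mathbf{e}_i\in\cC$ and hence a codeword of weight $1$ in $\cC$, contradicting the hypothesis $d(\cC)>1$. So every coordinate lies in the support of a nonzero codeword of $\cC^\perp$, each $r_i$ is finite, and trivially $r_i\ge d^\perp-1$ because $\wt(\mathbf{h})\ge d^\perp$ for every nonzero $\mathbf{h}\in\cC^\perp$. Consequently $r(\cC)\ge d^\perp-1$ holds unconditionally.

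The equivalence then follows by reading off the two directions. We have $r(\cC)=d^\perp-1$ if and only if $r_i=d^\perp-1$ for every $i\in[n]$; by the formula for $r_i$ this holds if and only if, for each $i$, there is a codeword of $\cC^\perp$ of weight exactly $d^\perp$ with nonzero $i$-th entry; that is, every index $i\in[n]$ belongs to the support of some minimum-weight codeword of $\cC^\perp$. Since $\cB_{d^\perp}(\cC^\perp)$ is precisely the set of supports of the weight-$d^\perp$ codewords of $\cC^\perp$, this says exactly that $\bigcup_{\mathcal{S}\in\cB_{d^\perp}(\cC^\perp)}\mathcal{S}=[n]$.

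The step needing the most care in the write-up is the opening reduction: one must argue that a single-coordinate linear repair corresponds to a \emph{single} dual codeword supported on $R\cup\{i\}$ with nonzero $i$-th entry, and that the requirement ``$|R_i|=r$'' in the definition of an $(n,k,d,q;r)$-LRC may be relaxed to ``$|R_i|\le r$'' without changing the minimum attainable $r$. Both points are routine once the dual-code correspondence is set up, and everything afterwards is bookkeeping; I do not anticipate any genuine obstacle.
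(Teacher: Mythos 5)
Your proposal is correct: the dual-codeword characterization of a linear repair set, the observation that $d(\cC)>1$ guarantees every coordinate lies in the support of some nonzero dual codeword, and the resulting identity $r(\cC)=\max_i\bigl(\min\{\wt(\mathbf{h}):\mathbf{h}\in\cC^\perp,\ h_i\neq 0\}-1\bigr)$ together give exactly the stated equivalence. Note that the paper does not prove this lemma at all --- it is quoted from the reference [TanP] --- and your argument is the standard one found there, so there is nothing to contrast; the only point worth double-checking in a final write-up is the padding step reconciling the paper's requirement $|R_i|=r$ with repair sets of smaller size, which you already flag and which causes no difficulty since $d^\perp-1\le n-1$.
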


\begin{lemma}[\cite{TanP}]\label{lem-R1115}
Let $\mathcal{C}$ be an NMDS code with $d(\C)>1$, then the minimum linear locality of $\mathcal{C}$ is either $d(\mathcal{C}^{\perp})-1$ or $d(\mathcal{C}^{\perp})$.
\end{lemma}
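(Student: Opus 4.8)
I would use the coordinate-wise description of linear locality: for an $[n,k]$ code $\cC$, the linear locality of a coordinate $i\in[n]$ equals $\ell_i:=\min\{\wt(\bc^\perp)\colon\bc^\perp\in\cC^\perp,\ c_i^\perp\neq 0\}-1$, and the minimum linear locality of $\cC$ is $\max_{i\in[n]}\ell_i$. The lower bound is then immediate: every nonzero codeword of $\cC^\perp$ has weight at least $d^\perp:=d(\cC^\perp)$, so $\ell_i\ge d^\perp-1$ for all $i$, hence $\max_i\ell_i\ge d^\perp-1$. Thus the whole content is the upper bound $\ell_i\le d^\perp$ for every $i$.

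Fix $i$ and split into two cases. If $i$ lies in $\bigcup_{\mathcal{S}\in\cB_{d^\perp}(\cC^\perp)}\mathcal{S}$, i.e.\ $i$ is covered by some minimum-weight codeword $\bc^\perp$ of $\cC^\perp$, then $\support(\bc^\perp)\setminus\{i\}$ is a repair set of size $d^\perp-1$ for $i$, so $\ell_i=d^\perp-1$ (this is the mechanism behind Lemma~\ref{lem-R1114}). Otherwise $i$ is covered by \emph{no} minimum-weight codeword of $\cC^\perp$, and I claim that $\cC^\perp$ nonetheless has a codeword of weight $d^\perp+1$ covering $i$, so that $\ell_i\le d^\perp$. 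Combining the two cases gives $\max_i\ell_i\in\{d^\perp-1,d^\perp\}$, the assertion.

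To prove the claim I would pass to the code $\widetilde{\cC}$ of length $n-1$ obtained by deleting coordinate $i$ from $\cC^\perp$. Since $\cC$ is NMDS, $\cC^\perp$ is $[n,n-k,k]$ AMDS, so $d^\perp=k$. Because $i$ is uncovered we have $e_i\notin\cC^\perp$ (a weight-$1$ codeword of $\cC^\perp$ would be a minimum-weight one covering $i$), so the deletion preserves the dimension, $\dim\widetilde{\cC}=n-k$; and $d(\widetilde{\cC})=k$, since the weight-$k$ codewords of $\cC^\perp$ avoid coordinate $i$ and keep weight $k$ after deletion, while a codeword of $\widetilde{\cC}$ of weight $\le k-1$ would lift to a codeword of $\cC^\perp$ of weight $\le k$, which is either too light or a weight-$k$ codeword covering $i$, both impossible. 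As $(n-1)-(n-k)+1=k$, the code $\widetilde{\cC}$ is $[n-1,n-k,k]$ MDS. Now the deletion map $\pi\colon\cC^\perp\to\widetilde{\cC}$ is a linear isomorphism (onto by construction, equal dimensions), so reading back the deleted coordinate gives a linear form $\phi\colon\widetilde{\cC}\to\gf_q$, $\phi(\widetilde{\bc})=(\pi^{-1}(\widetilde{\bc}))_i$, with $\phi\not\equiv 0$ (else $e_i\in\cC$, against $d(\cC)>1$). Since every MDS code is spanned by its minimum-weight codewords --- read them off a systematic generator matrix $[I\mid P]$, whose rows all have weight $d$ because every entry of $P$ is nonzero --- the minimum-weight codewords of $\widetilde{\cC}$ cannot all lie in the hyperplane $\ker\phi$; choosing one, $\widetilde{\bc}$, with $\phi(\widetilde{\bc})\neq 0$, its preimage $\bc^\perp:=\pi^{-1}(\widetilde{\bc})\in\cC^\perp$ satisfies $c_i^\perp\neq 0$ and $\wt(\bc^\perp)=\wt(\widetilde{\bc})+1=k+1=d^\perp+1$, which proves the claim.

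Almost everything here is bookkeeping once $\widetilde{\cC}$ is in play, so I expect the only genuinely delicate point to be confirming that $\widetilde{\cC}$ is really MDS --- that deleting an uncovered coordinate lowers neither its dimension nor its minimum distance. That is exactly where the hypothesis $d(\cC)>1$ (to rule out $e_i\in\cC$) and the standing case assumption (to rule out $e_i\in\cC^\perp$, and to forbid any minimum-weight dual codeword covering $i$) get used.
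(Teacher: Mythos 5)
The paper does not prove this lemma at all; it is imported verbatim from \cite{TanP}, so there is no in-paper argument to compare against. Judged on its own, your proof is correct and self-contained. The coordinate-wise reduction ($\ell_i=\min\{\wt(\bc^\perp):c_i^\perp\neq 0\}-1$, minimum linear locality $=\max_i\ell_i$) matches the definition implicit in Lemma~\ref{lem-R1114}, the lower bound is immediate, and the heart of the matter is exactly your claim that an uncovered coordinate $i$ still lies in the support of some dual codeword of weight $d^\perp+1$. Your verification that the punctured code $\widetilde{\cC}$ is $[n-1,n-k,k]$ MDS is sound: injectivity of the puncturing map needs $e_i\notin\cC^\perp$ (ruled out by the case hypothesis), and a nonzero word of $\widetilde{\cC}$ of weight $\le k-1$ would lift either to a dual word of weight $<d^\perp$ or to a minimum-weight dual word covering $i$, both excluded. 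The two remaining ingredients --- that $\phi\not\equiv 0$ (this is precisely where $d(\cC)>1$ enters, since $\phi\equiv 0$ forces $e_i\in\cC$) and that an MDS code is spanned by its minimum-weight codewords (rows of a systematic generator matrix $[I\mid P]$ have weight at most $1+(n'-k')=d'$, hence exactly $d'$) --- are both correct, and together they produce the desired weight-$(d^\perp+1)$ dual codeword with $c_i^\perp\neq 0$. This is essentially the standard puncturing argument for this fact; I see no gap.
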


\begin{lemma}[\cite{TanP}]\label{lem-R1116}
Let $\mathcal{C}$ be an NMDS code. If
$$
\mathop{\bigcap}_{\mathcal{S} \in \mathcal{B}_{d^{\perp}}(\mathcal{C}^{\perp})} \mathcal{S}=\emptyset,
$$
then the minimum linear locality of $\mathcal{C}^{\perp}$ is equal to $d(\mathcal{C})-1$.
\end{lemma}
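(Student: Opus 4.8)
The plan is to obtain the claim from Lemma~\ref{lem-R1114}, applied not to $\cC$ itself but to its dual $\cC^\perp$. Since $(\cC^\perp)^\perp=\cC$, that lemma --- legitimate here because $d(\cC^\perp)>1$ for a nondegenerate NMDS code --- tells us that the minimum linear locality of $\cC^\perp$ equals $d\bigl((\cC^\perp)^\perp\bigr)-1=d(\cC)-1$ if and only if
\[
\bigcup_{\mathcal{S}\in\cB_{d(\cC)}(\cC)}\mathcal{S}=[n].
\]
Thus the entire task reduces to deriving this covering identity from the hypothesis $\bigcap_{\mathcal{T}\in\cB_{d^\perp}(\cC^\perp)}\mathcal{T}=\emptyset$.

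To bridge the two, I would first record the numerology: writing $k=\dim\cC$, the NMDS property makes both $\cC$ and $\cC^\perp$ AMDS, so $d(\cC)=n-k$, $d(\cC^\perp)=k$, and in particular $d(\cC)+d^\perp=n$. Then I would argue by contradiction: suppose some coordinate $i\in[n]$ lies in no support belonging to $\cB_{d(\cC)}(\cC)$, and fix an arbitrary $\mathcal{T}=\support(\bc^\perp)\in\cB_{d^\perp}(\cC^\perp)$. Applying Lemma~\ref{lem-N2} to the NMDS code $\cC^\perp$ produces a minimum-weight codeword $\bc$ of $(\cC^\perp)^\perp=\cC$ with $\support(\bc)\cap\mathcal{T}=\emptyset$; since $|\support(\bc)|+|\mathcal{T}|=d(\cC)+d^\perp=n$, disjointness forces $\support(\bc)=[n]\setminus\mathcal{T}$. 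As $i\notin\support(\bc)$ by assumption, $i\in\mathcal{T}$; and since $\mathcal{T}$ was arbitrary, $i\in\bigcap_{\mathcal{T}\in\cB_{d^\perp}(\cC^\perp)}\mathcal{T}=\emptyset$, which is absurd. Hence every coordinate lies in some minimum-weight support of $\cC$, the displayed identity holds, and Lemma~\ref{lem-R1114} applied to $\cC^\perp$ finishes the proof.

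The step carrying the argument is the use of Lemma~\ref{lem-N2} \emph{on the dual code}: duality preserves the NMDS property, so that lemma attaches to each minimum-weight support $\mathcal{T}$ of $\cC^\perp$ a companion minimum-weight codeword of $\cC$, and it is the identity $d(\cC)+d(\cC^\perp)=n$ --- exactly where the NMDS hypothesis, as opposed to merely AMDS, is spent --- that upgrades disjointness of the two supports to their being complementary. Everything else, namely the passage to complements and the final appeal to Lemma~\ref{lem-R1114} for $\cC^\perp$, is routine bookkeeping.
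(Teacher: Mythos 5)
Your proof is correct. Note that the paper itself offers no proof of this lemma --- it is imported verbatim from \cite{TanP} --- so there is no in-paper argument to compare against; what you have written is a legitimate self-contained derivation from the two quoted ingredients. The two load-bearing steps both check out: (i) applying Lemma~\ref{lem-R1114} to $\cC^\perp$ (valid since $(\cC^\perp)^\perp=\cC$ and $d(\cC^\perp)=k>1$ for a nondegenerate NMDS code) correctly reduces the claim to the covering identity $\bigcup_{\mathcal{S}\in\cB_{d(\cC)}(\cC)}\mathcal{S}=[n]$; and (ii) applying Lemma~\ref{lem-N2} to the NMDS code $\cC^\perp$ attaches to each $\mathcal{T}\in\cB_{d^\perp}(\cC^\perp)$ a codeword $\bc\in\cC$ of weight $d(\cC)$ with $\support(\bc)\cap\mathcal{T}=\emptyset$, and the identity $d(\cC)+d(\cC^\perp)=(n-k)+k=n$ forces $\support(\bc)=[n]\setminus\mathcal{T}$. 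From there $\bigcup_{\mathcal{S}\in\cB_{d(\cC)}(\cC)}\mathcal{S}\supseteq\bigcup_{\mathcal{T}}\bigl([n]\setminus\mathcal{T}\bigr)=[n]\setminus\bigcap_{\mathcal{T}}\mathcal{T}=[n]$, so the contradiction framing is not even needed, though it is harmless. Only the existence part of Lemma~\ref{lem-N2} is used; the uniqueness is superfluous. This is presumably close in spirit to the argument in \cite{TanP}, and it is exactly the mechanism the present paper relies on when it verifies the intersection condition in the proof of Theorem~\ref{th-main}.
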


\begin{theorem}\label{th-main}
The NMDS code $\cC$ in Theorem \ref{thm-N} is a
$$
(q+5,3,q+2,q;2)-\mbox{LRC}
$$
and $\cC^{\perp}$ is a
$$
(q+5,q+2,3,q;q+1)-\mbox{LRC}.
$$
In addition, $\cC$ and $\cC^{\perp}$ are both distance-optimal and dimension-optimal LRCs.
\end{theorem}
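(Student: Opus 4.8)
The plan is to first determine the minimum linear localities of $\cC$ and $\cC^{\perp}$ via Lemmas \ref{lem-R1114}--\ref{lem-R1116}, and then to check that the resulting LRC parameters meet the Singleton-like bound (\ref{eqn-slbound}) and the Cadambe--Mazumdar bound (\ref{eqn-cmbound}). For $\cC$ we have $d(\cC^{\perp})=3>1$, so by Lemma \ref{lem-R1115} the minimum linear locality of $\cC$ is $2$ or $3$, and by Lemma \ref{lem-R1114} it equals $d^{\perp}-1=2$ exactly when $\bigcup_{\mathcal{S}\in\mathcal{B}_{3}(\cC^{\perp})}\mathcal{S}=[q+5]$. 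I would read this union off the weight-$3$ enumeration already carried out in the proof of Theorem \ref{thm-N}: Case 1.5 there shows that, for each of the $q/2$ pairs $\{x,y\}$ in the partition of $\gf_q$ induced by the $2$-to-$1$ map $z\mapsto f(z)+z$, the code $\cC^{\perp}$ contains a weight-$3$ codeword supported on the two coordinates indexed by $x$ and $y$ among the first $q$ together with the $(q+4)$-th coordinate, and these supports alone already cover all of the first $q$ coordinates and the $(q+4)$-th one. The singular or rank-$2$ submatrices $M_{1,10}$, $M_{1,11}$, $M_{1,12}$, $M_{1,17}$, $M_{1,18}$ arising in Cases 1.8--1.13 likewise give weight-$3$ codewords of $\cC^{\perp}$ whose supports contain the $(q+1)$-th, $(q+2)$-th, $(q+3)$-th and $(q+5)$-th coordinates. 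Hence the union is $[q+5]$, the locality of $\cC$ is $2$, and since $\cC$ is $[q+5,3,q+2]$ it is a $(q+5,3,q+2,q;2)$-LRC.

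For $\cC^{\perp}$, Lemma \ref{lem-R1116} reduces the claim to showing $\bigcap_{\mathcal{S}\in\mathcal{B}_{3}(\cC^{\perp})}\mathcal{S}=\emptyset$, after which the minimum linear locality of $\cC^{\perp}$ equals $d(\cC)-1=q+1$. This is immediate from the same catalogue: a Case 1.4 codeword has support $\{i,j,q+3\}$ with $i,j$ among the first $q$ coordinates other than the first two, which is disjoint from the support $\{q+1,q+2,q+4\}$ coming from $M_{1,17}$ in Case 1.13, and two disjoint supports force the global intersection to be empty. Since $\cC^{\perp}$ is $[q+5,q+2,3]$, it is a $(q+5,q+2,3,q;q+1)$-LRC.

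It then remains to verify optimality, which is routine. In (\ref{eqn-slbound}): for $\cC$, $n-k-\lceil k/r\rceil+2=(q+5)-3-\lceil 3/2\rceil+2=q+2=d(\cC)$; for $\cC^{\perp}$, $(q+5)-(q+2)-\lceil(q+2)/(q+1)\rceil+2=3=d(\cC^{\perp})$; so both are distance-optimal. In (\ref{eqn-cmbound}), take $t=1$ and use $k_{opt}^{(q)}(N,N)=1$ (a length-$N$ linear code of minimum distance $N$ has dimension $1$ by the Singleton bound): for $\cC$, $rt+k_{opt}^{(q)}(n-t(r+1),d)=2+k_{opt}^{(q)}(q+2,q+2)=3=\dim(\cC)$; for $\cC^{\perp}$, $(q+1)+k_{opt}^{(q)}(3,3)=q+2=\dim(\cC^{\perp})$. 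Since in each case the dimension already equals the value of the bound at $t=1$, it equals the minimum over $t\in\Bbb Z^{+}$, so both codes are dimension-optimal. The only genuinely non-mechanical part is the bookkeeping in the two locality steps, namely confirming that the weight-$3$ codewords of $\cC^{\perp}$ listed in the proof of Theorem \ref{thm-N} jointly cover every coordinate while still admitting two with disjoint supports; once this is settled, the optimality statements follow by substituting the parameters into (\ref{eqn-slbound}) and (\ref{eqn-cmbound}).
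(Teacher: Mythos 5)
Your proposal is correct and follows essentially the same route as the paper: apply Lemmas \ref{lem-R1114} and \ref{lem-R1116} using the weight-$3$ codeword catalogue from the proof of Theorem \ref{thm-N}, then substitute the parameters into the two bounds with $t=1$. In fact you supply more detail than the paper does on why the union of the weight-$3$ supports is $[q+5]$ and why their intersection is empty (the paper asserts both without elaboration), and you correctly get $d(\cC)-1=q+1$ for the locality of $\cC^{\perp}$, where the paper's proof contains a typo reading $q$.
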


\begin{proof}
By the proof of Theorem \ref{thm-N}, it is easy to deduce that
$$\mathop{\bigcup}_{\mathcal{S} \in \mathcal{B}_{3}(\mathcal{C}^{\perp})} \mathcal{S}=[q+5]$$ and $$\mathop{\bigcap}_{\mathcal{S} \in \mathcal{B}_{3}(\mathcal{C}^{\perp})} \mathcal{S}=\emptyset.$$
Then by Lemma \ref{lem-R1114}, the minimum linear locality of $\cC$ is $d(\cC^{\perp})-1=2$.
By Theorem \ref{lem-R1116}, the minimum linear locality of $\cC^{\perp}$ is $d(\cC)-1=q$.
Now we prove $\cC$ is an optimal LRC. Putting the parameters of the $(q+5,3,q+2,q;2)$-LRC into the right-hand side of the Singleton-like bound in (\ref{eqn-slbound}), we have
$$
n-k- \left \lceil \frac{k}{r} \right \rceil +2
=q+5-3- \left \lceil \frac{3}{2} \right \rceil +2=q+2.
$$
Hence $\cC$ is a distance-optimal LRC.
Putting $t=1$ and the parameters of the $(q+5,3,q+2,q;2)$-LRC into the right-hand side of the Cadambe-Mazumdar bound in (\ref{eqn-cmbound}), we have
$$
k \leq r+k_{opt}^{(q)}(n-(r+1),d)
=2+k_{opt}^{(q)}(q+2,q+2) = 3,
$$
where  $k_{opt}^{(q)}(q+2,q+2) = 1$ by the classical Singleton bound. Thus, $\cC$ is a dimension-optimal LRC. Similarly, we can prove $\cC^{\perp}$ is both distance-optimal and dimension-optimal.
\end{proof}

\section{Concluding remarks}\label{sec5}
With the special generator matrix $G$ and an oval polynomial $f(x)$, we presented a construction of  $[q+5,3,q+2]$ NMDS code $\C$ in Theorem \ref{thm-N} for $q=2^m$ and odd $m$.
Then we derived seven  infinite families of $[q+5,3,q+2]$ NMDS codes with the first seven families of oval polynomials documented in Theorem \ref{thm-knownopolys}.
The NMDS codes $\C$ and $\C^\perp$ were proved to be both distance-optimal and dimension-optimal LRCs in Theorem \ref{th-main}.

In \cite{LC}, a class of optimal locally repairable codes of distances 3 and 4 with unbounded length was constructed.
We remark that the optimal locally repairable codes  in this paper are not contained in \cite{LC} as they have different lengths.
Finally, we point out that the NMDS codes in this paper have larger lengths than those in \cite{Wqiu}.

\end{document}